\newcommand{\warningsign}{\raisebox{1pt}{\fontencoding{U}\fontfamily{futs}\selectfont\char 66\relax}\xspace}
  \newcommand{\showoptional}{1}
  \newcommand{\ismain}{0}
\newcommand{\past}{\textbf{past}}
\newcommand{\inp}[1]{#1^{i}}
\newcommand{\outp}[1]{#1^{o}}
\tikzstyle{env}=[copoint,regular polygon rotate=0,minimum width=0.2cm, fill=black]
\tikzstyle{probs}=[shape=semicircle,fill=white,draw=black,shape border rotate=180,minimum width=1.2cm]
\tikzstyle{every picture}=[baseline=-0.25em,scale=0.5]
\tikzstyle{dotpic}=[] % for backwards-compatibility
\tikzstyle{diredges}=[every to/.style={diredge}]
\tikzstyle{math matrix}=[matrix of math nodes,left delimiter=(,right delimiter=),inner sep=2pt,column sep=1em,row sep=0.5em,nodes={inner sep=0pt},text height=1.5ex, text depth=0.25ex]
\tikzstyle{inline text}=[text height=1.5ex, text depth=0.25ex,yshift=0.5mm]
\tikzstyle{label}=[font=\footnotesize,text height=1.5ex, text depth=0.25ex,yshift=0.5mm]
\tikzstyle{left label}=[label,anchor=east,xshift=1.5mm]
\tikzstyle{right label}=[label,anchor=west,xshift=-1.5mm]
\tikzstyle{braceedge}=[decorate,decoration={brace,amplitude=2mm,raise=-1mm}]
\tikzstyle{small braceedge}=[decorate,decoration={brace,amplitude=1mm,raise=-1mm}]
\tikzstyle{doubled}=[line width=1.6pt] % set the line width for all doubled (quantum) maps/wires
\tikzstyle{boldedge}=[doubled,shorten <=-0.17mm,shorten >=-0.17mm]
\tikzstyle{boldedgegray}=[doubled,gray,shorten <=-0.17mm,shorten >=-0.17mm]
\tikzstyle{singleedgegray}=[gray]%,shorten <=-0.1mm,shorten >=-0.1mm]
\tikzstyle{semidoubled}=[line width=1.4pt] % set the line width for all doubled (quantum) maps/wires
\tikzstyle{semiboldedgegray}=[semidoubled,gray,shorten <=-0.17mm,shorten >=-0.17mm]
\tikzstyle{boxedge}=[semiboldedgegray]
\tikzstyle{boldedgedashed}=[very thick,dashed,shorten <=-0.17mm,shorten >=-0.17mm]
\tikzstyle{vboldedgedashed}=[doubled,dashed,shorten <=-0.17mm,shorten >=-0.17mm]
\tikzstyle{left hook arrow}=[left hook-latex]
\tikzstyle{right hook arrow}=[right hook-latex]
\tikzstyle{sembracket}=[line width=0.5pt,shorten <=-0.07mm,shorten >=-0.07mm]
\tikzstyle{causal edge}=[->,thick,gray]
\tikzstyle{causal nondir}=[thick,gray]
\tikzstyle{timeline}=[thick,gray, dashed]
\tikzstyle{cedge}=[<->,thick,gray!70!white]
\tikzstyle{empty diagram}=[draw=gray!40!white,dashed,shape=rectangle,minimum width=1cm,minimum height=1cm]
\tikzstyle{empty diagram small}=[draw=gray!50!white,dashed,shape=rectangle,minimum width=0.6cm,minimum height=0.5cm]
\tikzstyle{dot}=[inner sep=0mm,minimum width=2mm,minimum height=2mm,draw,shape=circle]  
\tikzstyle{Wsquare}=[white dot, shape=regular polygon, rounded corners=0.8 mm, minimum size=3.3 mm, regular polygon sides=3, outer sep=-0.2mm]
\tikzstyle{Wsquareadj}=[white dot, shape=regular polygon, rounded corners=0.8 mm, minimum size=3.3 mm, regular polygon sides=3, outer sep=-0.2mm, regular polygon rotate=180]
\tikzstyle{ddot}=[inner sep=0mm, doubled, minimum width=2.5mm,minimum height=2.5mm,draw,shape=circle]
\tikzstyle{black dot}=[dot,fill=black]
\tikzstyle{white dot}=[dot,fill=white,,text depth=-0.2mm]
\tikzstyle{white Wsquare}=[Wsquare,fill=gray,,text depth=-0.2mm]
\tikzstyle{white Wsquareadj}=[Wsquareadj,fill=white,,text depth=-0.2mm]
\tikzstyle{green dot}=[white dot] % for backwards-compatibility
\tikzstyle{gray dot}=[dot,fill=gray!40!white,,text depth=-0.2mm]
\tikzstyle{red dot}=[gray dot] % for backwards-compatibility
\tikzstyle{black ddot}=[ddot,fill=black]
\tikzstyle{white ddot}=[ddot,fill=white]
\tikzstyle{gray ddot}=[ddot,fill=gray!40!white]
\tikzstyle{gray edge}=[gray!60!white]
\tikzstyle{small dot}=[inner sep=0.5mm,minimum width=0pt,minimum height=0pt,draw,shape=circle]
\tikzstyle{small black dot}=[small dot,fill=black]
\tikzstyle{small white dot}=[small dot,fill=white]
\tikzstyle{small gray dot}=[small dot,fill=gray!40!white]
\tikzstyle{causal dot}=[inner sep=0.4mm,minimum width=0pt,minimum height=0pt,draw=white,shape=circle,fill=gray!40!white]
\tikzstyle{phase dimensions}=[minimum size=5mm,font=\footnotesize,rectangle,rounded corners=2.5mm,inner sep=0.2mm,outer sep=-2mm]
\tikzstyle{dphase dimensions}=[minimum size=5mm,font=\footnotesize,rectangle,rounded corners=2.5mm,inner sep=0.2mm,outer sep=-2mm]
\tikzstyle{white phase dot}=[dot,fill=white,phase dimensions]
\tikzstyle{white phase ddot}=[ddot,fill=white,dphase dimensions]
\tikzstyle{white rect ddot}=[draw=black,fill=white,doubled,minimum size=5mm,font=\footnotesize,rectangle,rounded corners=2.5mm,inner sep=0.2mm]
\tikzstyle{gray rect ddot}=[draw=black,fill=gray!40!white,doubled,minimum size=6mm,font=\footnotesize,rectangle,rounded corners=3mm]
\tikzstyle{gray phase dot}=[dot,fill=gray!40!white,phase dimensions]
\tikzstyle{gray phase ddot}=[ddot,fill=gray!40!white,dphase dimensions]
\tikzstyle{grey phase dot}=[gray phase dot]
\tikzstyle{grey phase ddot}=[gray phase ddot]
\tikzstyle{small phase dimensions}=[minimum size=4mm,font=\tiny,rectangle,rounded corners=2mm,inner sep=0.2mm,outer sep=-2mm]
\tikzstyle{small dphase dimensions}=[minimum size=4mm,font=\tiny,rectangle,rounded corners=2mm,inner sep=0.2mm,outer sep=-2mm]
\tikzstyle{small gray phase dot}=[dot,fill=gray!40!white,small phase dimensions]
\tikzstyle{small gray phase ddot}=[ddot,fill=gray!40!white,small dphase dimensions]
\tikzstyle{small map}=[draw,shape=rectangle,minimum height=4mm,minimum width=4mm,fill=white]
\tikzstyle{cnot}=[fill=white,shape=circle,inner sep=-1.4pt]
\tikzstyle{asym hadamard}=[fill=white,draw,shape=NEbox,inner sep=0.6mm,font=\footnotesize,minimum height=4mm]
\tikzstyle{asym hadamard conj}=[fill=white,draw,shape=NWbox,inner sep=0.6mm,font=\footnotesize,minimum height=4mm]
\tikzstyle{asym hadamard dag}=[fill=white,draw,shape=SEbox,inner sep=0.6mm,font=\footnotesize,minimum height=4mm]
\tikzstyle{hadamard}=[fill=white,draw,inner sep=0.6mm,font=\footnotesize,minimum height=4mm,minimum width=4mm]
\tikzstyle{small hadamard}=[fill=white,draw,inner sep=0.6mm,minimum height=1.5mm,minimum width=1.5mm]
\tikzstyle{small hadamard rotate}=[small hadamard,rotate=45]
\tikzstyle{dhadamard}=[hadamard,doubled]
\tikzstyle{small dhadamard}=[small hadamard,doubled]
\tikzstyle{small dhadamard rotate}=[small hadamard rotate,doubled]
\tikzstyle{antipode}=[white dot,inner sep=0.3mm,font=\footnotesize]
\tikzstyle{scalar}=[diamond,draw,inner sep=0.5pt,font=\small]
\tikzstyle{dscalar}=[diamond,doubled, draw,inner sep=0.5pt,font=\small]
\tikzstyle{small box}=[rectangle,inline text,fill=white,draw,minimum height=5mm,yshift=-0.5mm,minimum width=5mm,font=\small]
\tikzstyle{small gray box}=[small box,fill=gray!30]
\tikzstyle{medium box}=[rectangle,inline text,fill=white,draw,minimum height=5mm,yshift=-0.5mm,minimum width=10mm,font=\small]
\tikzstyle{square box}=[small box] % for backwards-compatibility
\tikzstyle{medium gray box}=[small box,fill=gray!30]
\tikzstyle{semilarge box}=[rectangle,inline text,fill=white,draw,minimum height=5mm,yshift=-0.5mm,minimum width=12.5mm,font=\small]
\tikzstyle{large box}=[rectangle,inline text,fill=white,draw,minimum height=5mm,yshift=-0.5mm,minimum width=15mm,font=\small]
\tikzstyle{large gray box}=[small box,fill=gray!30]
\tikzstyle{Bayes box}=[rectangle,fill=black,draw, minimum height=3mm, minimum width=3mm]
\tikzstyle{gray square point}=[small box,fill=gray!50]
\tikzstyle{dphase box white}=[dhadamard]
\tikzstyle{dphase box gray}=[dhadamard,fill=gray!50!white]
\tikzstyle{phase box white}=[hadamard]
\tikzstyle{phase box gray}=[hadamard,fill=gray!50!white]
\tikzstyle{point}=[regular polygon,regular polygon sides=3,draw,scale=0.75,inner sep=-0.5pt,minimum width=9mm,fill=white,regular polygon rotate=180]
\tikzstyle{point nosep}=[regular polygon,regular polygon sides=3,draw,scale=0.75,inner sep=-2pt,minimum width=9mm,fill=white,regular polygon rotate=180]
\tikzstyle{copoint}=[regular polygon,regular polygon sides=3,draw,scale=0.75,inner sep=-0.5pt,minimum width=9mm,fill=white]
\tikzstyle{dpoint}=[point,doubled]
\tikzstyle{dcopoint}=[copoint,doubled]
\tikzstyle{pointgrow}=[shape=cornerpoint,kpoint common,scale=0.75,inner sep=3pt]
\tikzstyle{pointgrow dag}=[shape=cornercopoint,kpoint common,scale=0.75,inner sep=3pt]
\tikzstyle{wide copoint}=[fill=white,draw,shape=isosceles triangle,shape border rotate=90,isosceles triangle stretches=true,inner sep=0pt,minimum width=1.5cm,minimum height=6.12mm]
\tikzstyle{wide point}=[fill=white,draw,shape=isosceles triangle,shape border rotate=-90,isosceles triangle stretches=true,inner sep=0pt,minimum width=1.5cm,minimum height=6.12mm,yshift=-0.0mm]
\tikzstyle{wide point plus}=[fill=white,draw,shape=isosceles triangle,shape border rotate=-90,isosceles triangle stretches=true,inner sep=0pt,minimum width=1.74cm,minimum height=7mm,yshift=-0.0mm]
\tikzstyle{wide dpoint}=[fill=white,doubled,draw,shape=isosceles triangle,shape border rotate=-90,isosceles triangle stretches=true,inner sep=0pt,minimum width=1.5cm,minimum height=6.12mm,yshift=-0.0mm]
\tikzstyle{tinypoint}=[regular polygon,regular polygon sides=3,draw,scale=0.55,inner sep=-0.15pt,minimum width=6mm,fill=white,regular polygon rotate=180] 
\tikzstyle{white point}=[point]
\tikzstyle{white dpoint}=[dpoint]
\tikzstyle{green point}=[white point] % for backwards-compatibility
\tikzstyle{white copoint}=[copoint]
\tikzstyle{gray point}=[point,fill=gray!40!white]
\tikzstyle{gray dpoint}=[gray point,doubled]
\tikzstyle{red point}=[gray point] % for backwards-compatibility
\tikzstyle{gray copoint}=[copoint,fill=gray!40!white]
\tikzstyle{gray dcopoint}=[gray copoint,doubled]
\tikzstyle{white point guide}=[regular polygon,regular polygon sides=3,font=\scriptsize,draw,scale=0.65,inner sep=-0.5pt,minimum width=9mm,fill=white,regular polygon rotate=180]
\tikzstyle{black point}=[point,fill=black,font=\color{white}]
\tikzstyle{black copoint}=[copoint,fill=black,font=\color{white}]
\tikzstyle{tiny gray point}=[tinypoint,fill=gray!40!white]
\tikzstyle{diredge}=[->]
\tikzstyle{ddiredge}=[<->]
\tikzstyle{rdiredge}=[<-]
\tikzstyle{thickdiredge}=[->, very thick]
\tikzstyle{pointer edge}=[->,very thick,gray]
\tikzstyle{pointer edge part}=[very thick,gray]
\tikzstyle{dashed edge}=[dashed]
\tikzstyle{thick dashed edge}=[very thick,dashed]
\tikzstyle{thick gray dashed edge}=[thick dashed edge,gray!40]
\tikzstyle{thick map edge}=[very thick,|->]
\newcommand{\boxshape}[3]{%
\pgfdeclareshape{#1}{
\inheritsavedanchors[from=rectangle] % this is nearly a rectangle
\inheritanchorborder[from=rectangle]
\inheritanchor[from=rectangle]{center}
\inheritanchor[from=rectangle]{north}
\inheritanchor[from=rectangle]{south}
\inheritanchor[from=rectangle]{west}
\inheritanchor[from=rectangle]{east}
% ... and possibly more
\backgroundpath{% this is new
% store lower right in xa/ya and upper right in xb/yb
\southwest \pgf@xa=\pgf@x \pgf@ya=\pgf@y
\northeast \pgf@xb=\pgf@x \pgf@yb=\pgf@y

\@tempdima=#2
\@tempdimb=#3

\pgfpathmoveto{\pgfpoint{\pgf@xa - 5pt + \@tempdima}{\pgf@ya}}
\pgfpathlineto{\pgfpoint{\pgf@xa - 5pt - \@tempdima}{\pgf@yb}}
\pgfpathlineto{\pgfpoint{\pgf@xb + 5pt + \@tempdimb}{\pgf@yb}}
\pgfpathlineto{\pgfpoint{\pgf@xb + 5pt - \@tempdimb}{\pgf@ya}}
\pgfpathlineto{\pgfpoint{\pgf@xa - 5pt + \@tempdima}{\pgf@ya}}
\pgfpathclose
}
}}
\tikzstyle{cloud}=[shape=cloud,draw,minimum width=1.5cm,minimum height=1.5cm]
\tikzstyle{map}=[draw,shape=NEbox,inner sep=2pt,minimum height=6mm,fill=white]
\tikzstyle{dashedmap}=[draw,dashed,shape=NEbox,inner sep=2pt,minimum height=6mm,fill=white]
\tikzstyle{mapdag}=[draw,shape=SEbox,inner sep=2pt,minimum height=6mm,fill=white]
\tikzstyle{mapadj}=[draw,shape=SEbox,inner sep=2pt,minimum height=6mm,fill=white]
\tikzstyle{maptrans}=[draw,shape=SWbox,inner sep=2pt,minimum height=6mm,fill=white]
\tikzstyle{mapconj}=[draw,shape=NWbox,inner sep=2pt,minimum height=6mm,fill=white]
\tikzstyle{medium map}=[draw,shape=NEbox,inner sep=2pt,minimum height=6mm,fill=white,minimum width=7mm]
\tikzstyle{medium map dag}=[draw,shape=SEbox,inner sep=2pt,minimum height=6mm,fill=white,minimum width=7mm]
\tikzstyle{medium map adj}=[draw,shape=SEbox,inner sep=2pt,minimum height=6mm,fill=white,minimum width=7mm]
\tikzstyle{medium map trans}=[draw,shape=SWbox,inner sep=2pt,minimum height=6mm,fill=white,minimum width=7mm]
\tikzstyle{medium map conj}=[draw,shape=NWbox,inner sep=2pt,minimum height=6mm,fill=white,minimum width=7mm]
\tikzstyle{semilarge map}=[draw,shape=NEbox,inner sep=2pt,minimum height=6mm,fill=white,minimum width=9.5mm]
\tikzstyle{semilarge map trans}=[draw,shape=SWbox,inner sep=2pt,minimum height=6mm,fill=white,minimum width=9.5mm]
\tikzstyle{semilarge map adj}=[draw,shape=SEbox,inner sep=2pt,minimum height=6mm,fill=white,minimum width=9.5mm]
\tikzstyle{semilarge map dag}=[draw,shape=SEbox,inner sep=2pt,minimum height=6mm,fill=white,minimum width=9.5mm]
\tikzstyle{semilarge map conj}=[draw,shape=NWbox,inner sep=2pt,minimum height=6mm,fill=white,minimum width=9.5mm]
\tikzstyle{large map}=[draw,shape=NEbox,inner sep=2pt,minimum height=6mm,fill=white,minimum width=12mm]
\tikzstyle{large map conj}=[draw,shape=NWbox,inner sep=2pt,minimum height=6mm,fill=white,minimum width=12mm]
\tikzstyle{very large map}=[draw,shape=NEbox,inner sep=2pt,minimum height=6mm,fill=white,minimum width=17mm]
\tikzstyle{medium dmap}=[draw,doubled,shape=NEbox,inner sep=2pt,minimum height=6mm,fill=white,minimum width=7mm]
\tikzstyle{medium dmap dag}=[draw,doubled,shape=SEbox,inner sep=2pt,minimum height=6mm,fill=white,minimum width=7mm]
\tikzstyle{medium dmap adj}=[draw,doubled,shape=SEbox,inner sep=2pt,minimum height=6mm,fill=white,minimum width=7mm]
\tikzstyle{medium dmap trans}=[draw,doubled,shape=SWbox,inner sep=2pt,minimum height=6mm,fill=white,minimum width=7mm]
\tikzstyle{medium dmap conj}=[draw,doubled,shape=NWbox,inner sep=2pt,minimum height=6mm,fill=white,minimum width=7mm]
\tikzstyle{semilarge dmap}=[draw,doubled,shape=NEbox,inner sep=2pt,minimum height=6mm,fill=white,minimum width=9.5mm]
\tikzstyle{semilarge dmap trans}=[draw,doubled,shape=SWbox,inner sep=2pt,minimum height=6mm,fill=white,minimum width=9.5mm]
\tikzstyle{semilarge dmap adj}=[draw,doubled,shape=SEbox,inner sep=2pt,minimum height=6mm,fill=white,minimum width=9.5mm]
\tikzstyle{semilarge dmap dag}=[draw,doubled,shape=SEbox,inner sep=2pt,minimum height=6mm,fill=white,minimum width=9.5mm]
\tikzstyle{semilarge dmap conj}=[draw,doubled,shape=NWbox,inner sep=2pt,minimum height=6mm,fill=white,minimum width=9.5mm]
\tikzstyle{large dmap}=[draw,doubled,shape=NEbox,inner sep=2pt,minimum height=6mm,fill=white,minimum width=12mm]
\tikzstyle{large dmap conj}=[draw,doubled,shape=NWbox,inner sep=2pt,minimum height=6mm,fill=white,minimum width=12mm]
\tikzstyle{large dmap trans}=[draw,doubled,shape=SWbox,inner sep=2pt,minimum height=6mm,fill=white,minimum width=12mm]
\tikzstyle{large dmap adj}=[draw,doubled,shape=SEbox,inner sep=2pt,minimum height=6mm,fill=white,minimum width=12mm]
\tikzstyle{large dmap dag}=[draw,doubled,shape=SEbox,inner sep=2pt,minimum height=6mm,fill=white,minimum width=12mm]
\tikzstyle{very large dmap}=[draw,doubled,shape=NEbox,inner sep=2pt,minimum height=6mm,fill=white,minimum width=19.5mm]
\tikzstyle{muxbox}=[draw,shape=rectangle,minimum height=3mm,minimum width=3mm,fill=white]
\tikzstyle{dmuxbox}=[muxbox,doubled]
\tikzstyle{box}=[draw,shape=rectangle,inner sep=2pt,minimum height=6mm,minimum width=6mm,fill=white]
\tikzstyle{dbox}=[draw,doubled,shape=rectangle,inner sep=2pt,minimum height=6mm,minimum width=6mm,fill=white]
\tikzstyle{dmap}=[draw,doubled,shape=NEbox,inner sep=2pt,minimum height=6mm,fill=white]
\tikzstyle{dmapdag}=[draw,doubled,shape=SEbox,inner sep=2pt,minimum height=6mm,fill=white]
\tikzstyle{dmapadj}=[draw,doubled,shape=SEbox,inner sep=2pt,minimum height=6mm,fill=white]
\tikzstyle{dmaptrans}=[draw,doubled,shape=SWbox,inner sep=2pt,minimum height=6mm,fill=white]
\tikzstyle{dmapconj}=[draw,doubled,shape=NWbox,inner sep=2pt,minimum height=6mm,fill=white]
\tikzstyle{ddmap}=[draw,doubled,dashed,shape=NEbox,inner sep=2pt,minimum height=6mm,fill=white]
\tikzstyle{ddmapdag}=[draw,doubled,dashed,shape=SEbox,inner sep=2pt,minimum height=6mm,fill=white]
\tikzstyle{ddmapadj}=[draw,doubled,dashed,shape=SEbox,inner sep=2pt,minimum height=6mm,fill=white]
\tikzstyle{ddmaptrans}=[draw,doubled,dashed,shape=SWbox,inner sep=2pt,minimum height=6mm,fill=white]
\tikzstyle{ddmapconj}=[draw,doubled,dashed,shape=NWbox,inner sep=2pt,minimum height=6mm,fill=white]
\tikzstyle{smap}=[draw,shape=sNEbox,fill=white]
\tikzstyle{smapdag}=[draw,shape=sSEbox,fill=white]
\tikzstyle{smapadj}=[draw,shape=sSEbox,fill=white]
\tikzstyle{smaptrans}=[draw,shape=sSWbox,fill=white]
\tikzstyle{smapconj}=[draw,shape=sNWbox,fill=white]
\tikzstyle{dsmap}=[draw,dashed,shape=sNEbox,fill=white]
\tikzstyle{dsmapdag}=[draw,dashed,shape=sSEbox,fill=white]
\tikzstyle{dsmaptrans}=[draw,dashed,shape=sSWbox,fill=white]
\tikzstyle{dsmapconj}=[draw,dashed,shape=sNWbox,fill=white]
\tikzstyle{mmap}=[draw,shape=mNEbox]
\tikzstyle{mmapdag}=[draw,shape=mSEbox]
\tikzstyle{mmaptrans}=[draw,shape=mSWbox]
\tikzstyle{mmapconj}=[draw,shape=mNWbox]
\tikzstyle{mmapgray}=[draw,fill=gray!40!white,shape=mNEbox]
\tikzstyle{smapgray}=[draw,fill=gray!40!white,shape=sNEbox]
\pgfmathsetmacro{\pgf@shorten@left}{\pgfkeysvalueof{/tikz/shorten left}}
\pgfmathsetmacro{\pgf@shorten@right}{\pgfkeysvalueof{/tikz/shorten right}}
\pgfmathsetmacro{\pgf@shorten@left}{\pgfkeysvalueof{/tikz/shorten left}}
\pgfmathsetmacro{\pgf@shorten@right}{\pgfkeysvalueof{/tikz/shorten right}}
\tikzstyle{kpoint common}=[draw,fill=white,inner sep=1pt,minimum height=4mm]
\tikzstyle{kpoint sc}=[shape=cornerpoint,kpoint common]
\tikzstyle{kpoint adjoint sc}=[shape=cornercopoint,kpoint common]
\tikzstyle{kpoint}=[shape=cornerpoint,shorten left=5pt,kpoint common]
\tikzstyle{kpoint adjoint}=[shape=cornercopoint,shorten left=5pt,kpoint common]
\tikzstyle{kpoint conjugate}=[shape=cornerpoint,shorten right=5pt,kpoint common]
\tikzstyle{kpoint transpose}=[shape=cornercopoint,shorten right=5pt,kpoint common]
\tikzstyle{kpoint symm}=[shape=cornerpoint,shorten left=5pt,shorten right=5pt,kpoint common]
\tikzstyle{wide kpoint sc}=[shape=cornerpoint,kpoint common, minimum width=1 cm]
\tikzstyle{wide kpointdag sc}=[shape=cornercopoint,kpoint common, minimum width=1 cm]
\tikzstyle{black kpoint}=[shape=cornerpoint,shorten left=5pt,kpoint common,fill=black,font=\color{white}]
\tikzstyle{black kpoint sm}=[shape=cornerpoint,shorten left=5pt,kpoint common,fill=black,font=\color{white},scale=0.75]
\tikzstyle{black kpoint adjoint}=[shape=cornercopoint,shorten left=5pt,kpoint common,fill=black,font=\color{white}]
\tikzstyle{black kpointadj}=[shape=cornercopoint,shorten left=5pt,kpoint common,fill=black,font=\color{white}]
\tikzstyle{black kpointadj sm}=[shape=cornercopoint,shorten left=5pt,kpoint common,fill=black,font=\color{white},scale=0.75]
\tikzstyle{black dkpoint}=[shape=cornerpoint,shorten left=5pt,kpoint common,fill=black, doubled,font=\color{white}]
\tikzstyle{black dkpoint adjoint}=[shape=cornercopoint,shorten left=5pt,kpoint common,fill=black, doubled,font=\color{white}]
\tikzstyle{black dkpointadj}=[shape=cornercopoint,shorten left=5pt,kpoint common,fill=black, doubled,font=\color{white}]
\tikzstyle{black dkpoint sm}=[shape=cornerpoint,shorten left=5pt,kpoint common,fill=black, doubled,font=\color{white},scale=0.75]
\tikzstyle{black dkpointadj sm}=[shape=cornercopoint,shorten left=5pt,kpoint common,fill=black, doubled,font=\color{white},scale=0.75] 
\tikzstyle{kpointdag}=[kpoint adjoint]
\tikzstyle{kpointadj}=[kpoint adjoint]
\tikzstyle{kpointconj}=[kpoint conjugate]
\tikzstyle{kpointtrans}=[kpoint transpose]
\tikzstyle{big kpoint}=[kpoint, minimum width=1.2 cm, minimum height=8mm, inner sep=4pt, text depth=3mm]
\tikzstyle{wide kpoint}=[kpoint, minimum width=1 cm, inner sep=2pt]%, text depth=-0.7 mm]
\tikzstyle{wide kpointdag}=[kpointdag, minimum width=1 cm, inner sep=2pt]%, text depth=0.7 mm]
\tikzstyle{wide kpointconj}=[kpointconj, minimum width=1 cm, inner sep=2pt]%, text depth=-0.7 mm]
\tikzstyle{wide kpointtrans}=[kpointtrans, minimum width=1 cm, inner sep=2pt]%, text depth=0.7 mm]
\tikzstyle{wider kpoint}=[kpoint, minimum width=1.25 cm, inner sep=2pt]%, text depth=-0.7 mm]
\tikzstyle{wider kpointdag}=[kpointdag, minimum width=1.25 cm, inner sep=2pt]%, text depth=0.7 mm]
\tikzstyle{wider kpointconj}=[kpointconj, minimum width=1.25 cm, inner sep=2pt]%, text depth=-0.7 mm]
\tikzstyle{wider kpointtrans}=[kpointtrans, minimum width=1.25 cm, inner sep=2pt]%, text depth=0.7 mm]
\tikzstyle{gray kpoint}=[kpoint,fill=gray!50!white]
\tikzstyle{gray kpointdag}=[kpointdag,fill=gray!50!white]
\tikzstyle{gray kpointadj}=[kpointadj,fill=gray!50!white]
\tikzstyle{gray kpointconj}=[kpointconj,fill=gray!50!white]
\tikzstyle{gray kpointtrans}=[kpointtrans,fill=gray!50!white]
\tikzstyle{gray dkpoint}=[kpoint,fill=gray!50!white,doubled]
\tikzstyle{gray dkpointdag}=[kpointdag,fill=gray!50!white,doubled]
\tikzstyle{gray dkpointadj}=[kpointadj,fill=gray!50!white,doubled]
\tikzstyle{gray dkpointconj}=[kpointconj,fill=gray!50!white,doubled]
\tikzstyle{gray dkpointtrans}=[kpointtrans,fill=gray!50!white,doubled]
\tikzstyle{white label}=[draw,fill=white,rectangle,inner sep=0.7 mm]
\tikzstyle{gray label}=[draw,fill=gray!50!white,rectangle,inner sep=0.7 mm]
\tikzstyle{black label}=[draw,fill=black,rectangle,inner sep=0.7 mm]
\tikzstyle{dkpoint}=[kpoint,doubled]
\tikzstyle{wide dkpoint}=[wide kpoint,doubled]
\tikzstyle{dkpointdag}=[kpoint adjoint,doubled]
\tikzstyle{wide dkpointdag}=[wide kpointdag,doubled]
\tikzstyle{dkcopoint}=[kpoint adjoint,doubled]
\tikzstyle{dkpointadj}=[kpoint adjoint,doubled]
\tikzstyle{dkpointconj}=[kpoint conjugate,doubled]
\tikzstyle{dkpointtrans}=[kpoint transpose,doubled]
\tikzstyle{kscalar}=[kpoint common, shape=EBox, inner xsep=-1pt, inner ysep=3pt,font=\small]
\tikzstyle{kscalarconj}=[kpoint common, shape=WBox, inner xsep=-1pt, inner ysep=3pt,font=\small]
\tikzstyle{spekpoint}=[kpoint sc,minimum height=5mm,inner sep=3pt]
\tikzstyle{spekcopoint}=[kpoint adjoint sc,minimum height=5mm,inner sep=3pt]
\tikzstyle{dspekpoint}=[spekpoint,doubled]
\tikzstyle{dspekcopoint}=[spekcopoint,doubled]
 \tikzstyle{upground}=[circuit ee IEC,thick,ground,rotate=90,scale=1.4]
 \tikzstyle{upgroundnormal}=[circuit ee IEC,thick,ground,rotate=90,scale=2]
 \tikzstyle{downground}=[circuit ee IEC,thick,ground,rotate=-90,scale=1.4]
 \tikzstyle{bigground}=[regular polygon,regular polygon sides=3,draw=gray,scale=0.50,inner sep=-0.5pt,minimum width=10mm,fill=gray]
\tikzstyle{arrs}=[-latex,font=\small,auto]
\tikzstyle{arrow plain}=[arrs]
\tikzstyle{arrow dashed}=[dashed,arrs]
\tikzstyle{arrow bold}=[very thick,arrs]
\tikzstyle{arrow hide}=[draw=white!0,-]
\tikzstyle{arrow reverse}=[latex-]
\tikzstyle{cdnode}=[]
\let\olddagger\dagger
\renewcommand{\dagger}{\ensuremath{\olddagger}\xspace}
\theoremstyle{definition}
\newtheorem{theorem}{Theorem}%[section]
\newtheorem*{theorem*}{Theorem}
\newtheorem{definition}[theorem]{Definition}
\newtheorem{example*}[theorem]{Example*}
\newtheorem{examples*}[theorem]{Examples*}
\newtheorem{remark*}[theorem]{Remark*}
\newtheorem{exer*}[theorem]{Exercise*}
\newkeycommand{\moral}[width=11cm][1]{\begin{center}
\fbox{\ \parbox{\commandkey{width}}{\centering #1\vphantom{Xy}}\ }
\end{center}}
\newkeycommand{\morallong}[width=11cm][1]{\par\medskip\noindent
\centerline{\fbox{\ \parbox{\commandkey{width}}{\centering #1\vphantom{Xy}}\ }}
\par\medskip\noindent}
\def\bR{\begin{color}{red}}
\def\bB{\begin{color}{blue}}
\def\bM{\begin{color}{magenta}}
\def\bC{\begin{color}{cyan}}
\def\bW{\begin{color}{white}}
\def\bBl{\begin{color}{black}}
\def\bG{\begin{color}{green}}
\def\bY{\begin{color}{yellow}}
\def\jR{\begin{color}{magenta}}
\def\jB{\begin{color}{cyan}}
\def\e{\end{color}\xspace}
\newcommand{\bit}{\begin{itemize}}
\newcommand{\eit}{\end{itemize}\par\noindent}
\newcommand{\ben}{\begin{enumerate}}
\newcommand{\een}{\end{enumerate}\par\noindent}
\newcommand{\beq}{\begin{equation}}
\newcommand{\eeq}{\end{equation}\par\noindent}
\newcommand{\beqa}{\begin{eqnarray*}}
\newcommand{\eeqa}{\end{eqnarray*}\par\noindent}
\newcommand{\beqn}{\begin{eqnarray}}
\newcommand{\eeqn}{\end{eqnarray}\par\noindent}
\newcommand{\timerev}[1]{\reflectbox{\rotatebox[origin=c]{180}{$#1$}}}
\begin{document}

\title{The time-reverse of any causal theory is eternal noise}
%\title{Retro-causality is eternal noise} % What do you think?
\author{Bob Coecke and Stefano Gogioso}
\affiliation{University of Oxford, Department of Computer Science.}
\author{John H.~Selby}
\affiliation{Imperial College, Department of Physics,} 
\affiliation{University of Oxford, Department of Computer Science}
\affiliation{Perimeter Institute for Theoretical Physics.}

\begin{abstract}
\noindent We consider a very general class of theories, \emph{process theories}, which capture the underlying structure common to  most theories of physics as we understand them today (be they established, toy or speculative theories). Amongst these theories, we 
%will be focusing 
focus on those which are \emph{causal}, in the sense that they are intrinsically compatible with the causal structure of space-time as required by relativity. We demonstrate that there is a sharp contrast between these theories and the corresponding time-reversed theories, in which time is taken to flow backwards from the future to the past. While the former typically feature a rich gamut of allowed states, the latter only allow for a single state: eternal noise. We illustrate this result by considering the time-reverse of quantum theory.  Moreover, we 
%also 
derive a strengthening of the result in PRL \textbf{108}, 200403 on signalling in time-reversed theories.
\end{abstract}

\pacs{03.65.Ca, 04.20.Cv}
\keywords{Causality, %retro-causality,
general relativity, time-reversal, quantum theory, eternal noise}

\maketitle

%\TODOb{Credit to PAVIA for causality postulate is missing throughout.}
%\TODOb{Shouldn't we spell out more explicitly why physicists always say that quantum theory is time-symmetric?}
%\TODOj{Add a bunch of citations to literature in physics, philosophy etc. where time symmetry is assumed.}

\noindent The question of whether theories of physics should be formulated in a time-symmetric fashion has received a lot of attention in the past few years \cite{aharonov2010time, CoeckeLal, d'ariano2014determinism,oreshkov2015operational,oreshkov2016operational,selby2017diagrammatic}, because of its deep foundational implications and the pivotal role it plays in the relationship between quantum theory and general relativity. In this work, we show %contribute to this debate by showing %DEBATE SOUNDS TOO SOFT
%arguing
that if a theory obeys the
%\jR operational\e\TODOj{Is the notion of terminality not trying to get away from viewing this as a purely operational requirement? Also, should we need to check for consistency of terminal vs. causal throughout.}
requirements imposed by
%general
relativistic principles then the time-reverse of the theory is necessarily reduced to a trivial state of eternal noise. This paints a provocative picture, in which time-symmetry breaks down in the most extreme manner imaginable, but is the natural consequence of taking general relativity seriously and applying its constraints to the letter.

%\COMMb{I decided to just kill this sentence with refs.}
%This is in stark contrast to the commonly held belief that nature should be fundamentally time-symmetric as has been espoused widely both in physics \bR ***cite***\e and philosophy \bR ***cite***\e.
The assumption that nature is time-symmetric arises from taking reversible dynamics as the primary object of importance for a theory. In contrast, we choose to work within the framework of \emph{process theories} \cite{CKpaperI, CKBook}, which forces one to treat all processes -- including states, effects and general transformations -- on an equal footing with reversible dynamics.  Our main result is then an immediate consequence of compatibility of processes with relativistic constraints. %Compatibility with relativity then immediately forces us to our conclusion.

%\jR We discuss some immediate consequences, as well as ways of reconciling this viewpoint with others appearing in the literature.\e\TODOj{We need to do this more clearly!}

\section{Process Theories}

%To make our case, we work within the framework of \emph{process theories} \cite{CKpaperI, CKBook}.
%On a conceptual level,
\noindent A process theory  is comprised of %is
a collection of \emph{systems}, graphically represented by wires, and \emph{processes}, graphically represented by boxes with wires as their inputs (at the bottom) and outputs (at the top). Special cases of processes include: \emph{states}, which are processes with no inputs; \emph{effects}, which are processes with no outputs; \emph{numbers}, which are processes with neither input nor output.
%, free to %freely
%float around the diagram\e.\TODO{We never need those, right?}\TODOj{They are briefly mentioned when discussing terminality of classical theory}
Processes can be
%\emph{composed in parallel} by placing the boxes side-by-side, and \emph{composed sequentially} by wiring the boxes together,
wired together in order to form
%a
\em diagrams \em (a.k.a.~\em circuits\em) such as in (\ref{eq:compound-process}) below. %as in the following example:
% \beq\label{eq:compound-process}
% \tikzfig{compound-process}
% \eeq
Note that, further to assigning a meaning to wires and boxes, a process theory should explicitly specify what it means to form diagrams; in particular, every diagram must itself correspond to a process in the theory.
% which
In the  case of diagram (\ref{eq:compound-process}),  this is a process with input $A$ and output $ABC$, i.e.~a \emph{composite} system.
%place boxes side-by-side and wire them together.
 %On a rigorous mathematical level %PROCESS THEORIES ARE RiGOROUS MATHEMATICS
In algebraic terms, process theories are  %can thus be modelled by considering
\emph{(strict) symmetric monoidal categories} \cite{CatsII}.

Recall that we adopted a convention whereby process inputs are drawn at the bottom, and process outputs are drawn at the top. When drawing diagrams, this means that the upward vertical direction is associated with time flowing forward:

% \[
% \tikzfig{compound-process-time}
% \]
\beq\label{eq:compound-process}
\tikzfig{compound-process-time}
\eeq

%\jR In the diagram above, we see two copies of system $A$ created in a two-system state $f$, while in parallel a process $h$ is applied to a third copy of system $A$, resulting in an output valued in system $D$. Subsequently, the leftmost copy of system $A$ is left untouched, while a process $g$ is jointly applied to the remaining copy of system $A$ and to system $D$, resulting in outputs valued in systems $B$ and $C$ respectively.\e\TODOj{Is this necessary?}

\noindent The rather general formulation of process theories has helped to solve a number of concrete problems in quantum foundations and quantum computation. In quantum foundations, for example, it has provided a complete and consistent formulation of Spekkens' toy theory, and pinpointed its key differences from quantum theory \cite{CES, MiriamSpek}. In quantum computation, it has resulted in the derivation of exact correspondences between quantum computational models \cite{DP2, BoixoHeunen, Clare} and error-correction \cite{Chancellor2016Coherent-Parity, BH-2017}.

One important thing to remark about process theories is that the diagrams capture \emph{all} there is to know about the theory. This includes all relevant systems are represented explicitly in the diagrams e.g.~also those carrying information about measurement outcomes and classical control in the case of quantum theory. This is in contrast to other `hybrid' formalisms, such as \cite{AC1,HardyJTF,Chiri1}, where classical systems are represented implicitly by means of indices and labels.

%The simplest %...is a single number, the empty diagram, it is the theory of nothingness
One example of a process theory is %classical
probability theory (a.k.a.~classical theory), with probability distributions as states and stochastic maps as processes.
%A more sophisticated example, and the one most relevant to this work, is given by
%\bB For\e
Another interesting example is given by quantum theory: processes include classical processes between classical %probabilistic
systems, % (i.e. stochastic maps),
quantum processes between quantum systems (i.e.~CPTP maps), as well as mixed classical-quantum processes (e.g.~preparations, measurements, %quantum
instruments, controlled unitaries, etc.) \cite{CKBook}.
%\jR Similarly, general relativity and quantum field theories can be modelled by process theories \cite{hardy2016operational,gogiosogenovese2017qft}\e\TODOj{Are we sure that Hardy's operational GR is a process theory? And for QFT, is it clear that any QFT can be described as a process theory?}.
Many other physical theories of interest can be formulated as process theories, even though they traditionally might not be. General relativity and quantum field theory, in particular, tend to take a ``block-universe'' approach (following Minkowski), as opposed to the ``compositional'' approach of process theories. However, recent work \cite{hardy2016operational,gogiosogenovese2017qft} has provided initial evidence that those theories can be reformulated in a suitably compositional way, and be turned into process theories.

%% Classical probability theory forms a process theory with probability distributions as states and stochastic maps as processes. The wires represent the supports of probability distributions.  Quantum theory forms a process theory, where wires can be either classical or quantum.  Hence processes include classical and quantum processes (e.g.~CP-maps), as well as interconversions (e.g.~measurements). More details can be found in \cite{CKBook} \bR ... GR, QFT ... \e

%% States and effects as both imposing constraints, but states going forwards in time, and effects going backwards in time.

\section{Process Terminality}

%\TODOb{What is the upshot of passing via the notion of effective theory, }\TODOj{I would remove it, I'm not convinced that the notion of an effective theory counts as a fundamental principle of science!}
%\noindent \bR One of the fundamental principles of science is that reality should be described in an \emph{effective} manner, i.e. without unnecessary reference to things which cannot be accurately known or understood within each specific scientific context. For example, an effective description of geological phenomena should be as independent as possible from the theory of subatomic particles, while an effective description of a quantum experiment on planet Earth would reasonably be expected to avoid references to what is happening in other galaxies. Moreover, when relativistic constraints are taken into account this understanding of reality graduates from \emph{fundamental principle} to \emph{physical law}: not only the description of reality here and now \emph{should not} depend on things happening far away in space or in the future, it \emph{cannot} depend on them, because of the causal structure of space-time.\e

%%Note that this principle also implies that when doing science we can ignore processes of which the output never reaches us before ending an experiment (cf.~the earlier mentioned happenings in another galaxy).  Clearly, such an assumption is absolutely crucial to even being able to do science, and hence, shouldn't come as a surprise.
\noindent Our starting point is the following principle, underpinning general scientific practice: the description of reality here and now \emph{cannot} depend on things happening too far away in space, or in the future. In general relativity, this principle manifests itself as
%because of
the causal structure of space-time.
From the point of view of process  theories, this means that there should be a way of
%disregarding information carried by
discarding  those systems which  don't affect our local description of reality, compatibly with the constraints imposed by relativity. %are to be ignored---e.g.~because they are inaccessible---at the time and place where an experiment is being carried out.
Implementing this prescription can be achieved by
%is equivalent  to
providing a designated \emph{discarding effect} for each system in the theory:
\[
\tikzfig{disc}
\]
%\bR The application of a discarding effect signals that, from that moment onwards, no information whatsoever is known about the system being discarded.\TODOb{This sentence may confuse more than it clarifies, maybe just kill?} \e
%\jR Discarding effects are required to respect
%% parallel
%composition, in the sense that discarding a
%%joint
%\jB composite\e system should be equivalent to discarding its component systems individually:
%\jB\[
%\tikzfig{disc-tensor}
%\]\e
%\TODOj{We haven't introduced the $\otimes$ symbol.\e
%\TODOj{Do we need this? We can derive this from terminality so is it necessary to specify separately? In fact, do we even use this result for this work?}
A special case of the discarding effect appears when there is nothing to actually discard
%the system being discarded is \jR trivial,\e\TODOj{This is the first mention of trivial systems.}
e.g.~at the output of an effect. In that case, the discarding effect is simply denoted by an empty diagram:
\[
\tikzfig{disc-effect}
\]
In order for discarding effects to accurately model the idea that a system
%\bB all information\e
%\TODOb{Cf.~above, tis seems to indicate that all of reality is about information.}
has been discarded,
%the allowed
the processes in the theory have to obey the following
%an additional
constraint:
%\TODOb{This is of course true, but is   really only a statement about types, while terminality equates processes.}
discarding all outputs after a process $f$ has occurred is equivalent to discarding all inputs beforehand; either way the system ends up discarded. Formally, the constraint is expressed by the following definition.

%discarding before or after a process,  $f$, are equivalent, either way the system ends up discarded. In particular, it doesn't matter which process $f$ is. Formally this is expressed as follows:
% of by the notion of \emph{terminality}.\e
% of \emph{terminality}, making the following definition a central one for this work.

\begin{definition}\label{def:terminal}
A process $f$ is \emph{terminal} if it satisfies:
\begin{equation}\label{eq:term}
  \tikzfig{term}
\end{equation}
and  a process theory is \emph{terminal} if all its processes are.
\end{definition}

%Intuitively,
Note that from this formal definition it clearly follows that we mean `truly discarding', i.e.~the process did not ``stash away'' any of its input, as there are no outputs, and diagrams are all there is to know about the theory.   Within the more restricted context of  Operational Probabilistic Theories \cite{Chiri1},  Definition \ref{def:terminal} corresponds to the principle of `no signalling from the future', which was introduced as a postulate in
% proven in
\cite{Chiri1}
and shown to be equivalent to
%imply
uniqueness of  deterministic effects.

The special of equation (\ref{eq:term}) where $f$ is an effect immediately implies that the only effects are the discarding ones themselves:
\beq\label{uniqueEffects}
\tikzfig{uniquenessOfEffects}
\eeq
%\bR And hence that composing discarding effects gives the discarding effect for the composite system.\e %\jR Hence, the previously imposed constraint on discarding effects is automatically satisfied in a terminal theory.\e

\noindent In the case of  probability theory, discarding effects are row vectors with all entries equal to 1, and in particular the
%discarding effect for the \jR trivial system\e
empty diagram is the number 1 itself. Discarding a state in classical probability theory (a column vector with non-negative real components) amounts to producing a number equal to the sum of its components: hence a terminal state is one with components summing to 1, i.e.~a probability distribution. More generally, discarding a process (a matrix with non-negative real components) amounts to producing a row vector equal to the sum of its rows: hence a terminal process is a matrix with entries of each column summing up to 1, i.e.~a stochastic matrix. In the case of quantum theory, the discarding effect for a quantum system is the process of taking the (partial) trace
%on
 of the system: hence terminal states are unit-trace density matrices, and terminal processes are trace-preserving CP maps.
%%For classical probability theory   discarding a state amounts to summing over all of its possible values, and hence terminality means that entries add up to $1$.  For stochastic maps it means that the entries in each column add up to $1$. For quantum theory discarding  of quantum systems is  the trace.  Hence, terminality means that density matrices have diagonal one, and that CP-maps are trace-preserving.  More details can be found in \cite{CKBook} \bR ... GR, QFT ... \e

It has been shown \cite{Chiri1,coecke2016terminality} that equation (\ref{eq:term}) allows one to derive the ``no-signalling'' principle, showcasing  a close connection between the notion of terminality and relativity theory. A much stronger result was obtained in \cite{IFF}, where it was shown that the very existence of relativistic causal structure is equivalent to terminality of a process theory.

\begin{theorem}[\cite{IFF}]
A process theory is \emph{causal}---in the sense of general relativity---if and only if it is \emph{terminal}.
\end{theorem}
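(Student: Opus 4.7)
The plan is to establish the biconditional by proving each implication separately, exploiting the already-mentioned link between discarding and ``no-signalling from the future''.

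For the direction terminal $\Rightarrow$ causal, I would start from a bipartite diagram in which Alice and Bob act on spacelike-separated systems via local processes $f_A$ and $f_B$. Relativistic causal structure demands that Alice's marginal statistics be independent of Bob's choice of $f_B$. This marginal is obtained by composing Alice's local observations with the discarding effect on Bob's outputs; equation~(\ref{eq:term}) then lets one pull the discarding back through $f_B$, so that $f_B$ disconnects from Alice's branch entirely. Iterating this rewriting along any admissible spacelike foliation extends the argument from a single spacelike cut to an arbitrary causal partial order, which is how general-relativistic causal compatibility is typically phrased in the process-theoretic language.

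For the reverse direction causal $\Rightarrow$ terminal, the task is to show that any theory compatible with relativistic causal structure must admit a canonical way of ``restricting'' any global process to a local spacetime region, obtained by forgetting whatever happens outside. For each system $A$, this forgetting furnishes a distinguished effect $A \to I$, which one identifies as the discarding effect. Equation~(\ref{eq:term}) then becomes the statement that the local description obtained by discarding $f$'s outputs coincides with the one obtained by first discarding $f$'s inputs — both correspond to the case in which $f$ has been cut out of the local region and so cannot contribute. Uniqueness of the discarding effect (crucial for equation~(\ref{uniqueEffects})) follows by requiring that different causal continuations of the same local region agree on the restricted description they produce.

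The main obstacle, I anticipate, is formalising ``causal in the sense of general relativity'' within the abstract process-theoretic framework. General relativity does not come equipped with an intrinsic categorical structure, so one must first axiomatise relativistic causal structure in diagrammatic terms — presumably via compatibility of diagrams with admissible spacelike foliations of spacetime — and only then translate that geometric condition into the algebraic equation~(\ref{eq:term}). This translation is the real content of~\cite{IFF}: the geometric requirement of relativistic causality reduces, after categorical cleanup, to the existence of a distinguished family of effects, one for each system, with respect to which every process is terminal in the sense of Definition~\ref{def:terminal}. Once that axiomatic bridge is in place, both implications above become essentially formal manipulations of diagrams.
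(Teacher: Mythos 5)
This theorem is not proved in the paper at all: it is imported verbatim from \cite{IFF}, so there is no in-paper argument to compare yours against. Judged on its own terms, your proposal is a plausible outline of what such a proof must accomplish, but it is not yet a proof, and the gap is precisely the one you flag and then set aside. You never fix a process-theoretic definition of ``causal in the sense of general relativity''; you defer it as ``the real content of \cite{IFF}''. But without that definition neither implication can be carried out, and the definitional work is not a preliminary to the theorem --- it \emph{is} the theorem. In \cite{IFF} the relevant notion is a covariant assignment of the theory's diagrams to (diagrams of) causally ordered spacetime regions, and the equivalence with terminality is proved against that specific formalisation; a different reasonable formalisation could make the biconditional false.

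Two further concrete problems. First, your ``terminal $\Rightarrow$ causal'' direction establishes only no-signalling across a bipartite spacelike cut, which is the strictly weaker statement the paper already attributes to \cite{Chiri1,coecke2016terminality}; the claim that iterating the rewrite ``along any admissible spacelike foliation'' upgrades this to full compatibility with an arbitrary causal partial order is asserted, not argued, and it is exactly where the strengthening from no-signalling to causal structure lives. Second, your ``causal $\Rightarrow$ terminal'' direction essentially postulates that causal compatibility furnishes a canonical ``forgetting'' effect for each system satisfying equation~(\ref{eq:term}) and the uniqueness needed for equation~(\ref{uniqueEffects}); that is very close to assuming the conclusion of Definition~\ref{def:terminal} rather than deriving it from a prior geometric axiom. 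As it stands the proposal reads as a correct statement of the proof obligations rather than a discharge of them.
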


\noindent The importance of this result to the
%\jR operational\e
understanding of physics stems from the the different mathematical standing of causal structure and process terminality. Traditionally, causal structure is defined by a partial ordering between events \cite{kronheimer1967structure,MartinPanagaden,Malament}, with the interpretation that event $a$ can affect event $b$ if and only if $a \leq b$. Under this definition, causal structure is not \emph{intrinsic} to a process theory, instead having to be imposed \emph{on top} of it. By passing % from the ad-hoc condition regarding influences between events
%partial ordering description
to process terminality, relativistic constraints are directly \emph{incorporated} into the process theory, which becomes a full-fledged, stand-alone description of physical reality.%\TODOj{We're still imposing a partial ordering on `events' it's just that now its encoded in the structure of the diagram.}

\section{Main result}

\noindent
%Now that we have a diagrammatic understanding of what it means for a process theory to be causal, i.e.~for it to satisfy relativistic constraints, we can study its time-reverse with the same diagrammatic tools.
A major advantage of the process-theoretic format is the canonical manner in which time-reversal arises, even in absence of a time parameter $t$:
%Recall that our bottom-to-top convention on sequential composition of processes meant that the upward vertical direction is identified with time flowing forward.
it is %then
simply obtained by inverting the temporal reading order of the diagrams, turning our bottom-to-top convention into a top-to-bottom one without altering the diagrams themselves:
\[
\tikzfig{compound-process-time}\quad\mapsto\quad\tikzfig{compound-process-time-rev}
\]
Alternatively, we can retain our bottom-to-top reading convention, and flip the diagrams themselves upside-down, obtaining the corresponding \emph{time-reversed diagrams}:
\[
\tikzfig{compound-process-time}\quad\mapsto\quad\tikzfig{compound-process-time-flip}
\]
Notably, our definition of time-reversal does not include a specification of what time-reversing a box actually means: the reason for this is that such specifics are of no importance to our main result, which applies to any notion of time-reversal which respects the structure of diagrams.
%---or, equivalently, the sequential and parallel composition of processes.
%\bR Mathematically, this generic constraint is captured by asking for time-reversal to be modelled by an \emph{invertible contravariant functor} \footnote{A \emph{functor} a mapping which sends processes from one process theory to processes in another process theory while respecting both sequential and parallel composition. A \emph{ contravariant functor} is one which similarly respects parallel composition, but inverts the order of sequential composition.}, yielding a notion which makes sense for a very large class of theories \cite{selby2017diagrammatic,CKpaperI, CKBook}.\TODOb{Not sure what is being said here, maybe just kill?} \e
In the special case where the original process theory is causal, we refer to its time-reverse as a \emph{retro-causal} process theory.
%Our notion of time-reversal may seem to be lacking specification of what time-reversing a box actually means. However, this turns out to be of no importance whatsoever for the result that we will obtain. If one wishes too, one could for example assume that the time reverse of a box is its transpose (\bR REF\e), which is a notion that makes sense for a very general class of diagrams \cite{CKpaperI, CKBook}.  However, the key property that time-reversing should satisfy is that it should respect the structure of diagrams.
\begin{theorem}[Eternal noise]\label{thm-time-rev}
In a retro-causal process theory every system has a unique state, namely the time-reversal of the discarding effect:
	\begin{equation}\label{eq:disc-rev}
		\tikzfig{disc-rev}
	\end{equation}
\end{theorem}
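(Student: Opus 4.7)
The plan is to derive the theorem as the immediate time-reverse of equation (\ref{uniqueEffects}), which states that in any terminal (causal) process theory the only effects are the discarding effects themselves. Since a retro-causal theory is by definition the upside-down flip of a causal one, states in the retro-causal theory correspond bijectively to effects in the underlying causal theory, and the eternal-noise statement is essentially a restatement of the uniqueness of effects on the causal side.

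First, I would make explicit the correspondence between states in the retro-causal theory and effects in its causal parent: a state $\psi$ with output $A$ in the retro-causal theory is, when flipped upside down, a diagram with input $A$ and no outputs in the original causal theory, that is, an effect on $A$. This is a direct consequence of the top-to-bottom versus bottom-to-top reading convention swap that defines time-reversal in the paper. Then I would invoke (\ref{uniqueEffects}), which followed from specializing the terminality condition (\ref{eq:term}) to processes with no outputs: any effect on $A$ in the causal parent must equal the discarding effect on $A$. Applying the correspondence, every state on $A$ in the retro-causal theory must equal the time-reverse of the discarding effect on $A$, which is exactly the content of (\ref{eq:disc-rev}).

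There is no substantial obstacle, since the paper stipulates that time-reversal depends only on the diagrammatic structure and not on how individual boxes are reversed; under this stipulation, uniqueness of effects in the causal theory and uniqueness of states in the retro-causal theory are two ways of saying the same thing. The only mildly subtle point is to confirm that the flipping correspondence is faithful on processes, so that distinct retro-causal states pull back to distinct causal effects; this is guaranteed by the paper's earlier stance that diagrams capture all there is to know about a process theory, so no data hidden inside a box can differentiate two states that agree as flipped diagrams.
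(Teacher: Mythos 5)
Your argument is correct and is essentially the paper's own proof: the paper likewise flips the terminality condition (\ref{eq:term}), specializes to the case where the time-reversed process is a state (so the original process is an effect), and concludes via the uniqueness of effects --- which is exactly the content of equation (\ref{uniqueEffects}) that you invoke directly. The routes differ only in presentation, not in substance.
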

\begin{proof}
By flipping the process terminality condition of equation (\ref{eq:term}) upside-down we obtain the following equation, holding for every process $\timerev{f}$ in the time-reverse of the theory:
\[
  \tikzfig{term-rev}
\]
In the special case where the process $\timerev{f}$ is chosen to be a state, we have that the process $f$ in the original theory must necessarily be an effect. But $f$ has to be terminal, i.e.~it must be the discarding effect itself, and hence $\timerev{f}$ is always necessarily equal to the time-reverse of the discarding effect (we have used the fact that the time-reversal of the empty diagram is again the empty diagram). %More in detail, this follows from the time reverse of equation (\ref{uniqueEffects})
%\beq\label{eq:term-rev-state}
%\tikzfig{term-rev-state}
%\eeq
%where we have used
%and the  fact that the time-reversal of the empty diagram is again the empty diagram. %\TODOj{We should point out that here we are using that the time reverse of the empty diagram is again the empty diagram.}
%As a consequence, the only states in the time-reverse of a causal process theory are the time-reverses of the discarding effects.
This concludes our proof.
\end{proof}

%In order to understand the conceptual
%%philosophical
%implications of this result, we must first understand what the time-reversal of a discarding effect looks like:
%%\jR on a classical probabilistic system, it is (proportional to) the uniform probability distribution, while on a quantum system it is (proportional to) the totally mixed quantum state.\e\TODOj{This doesn't follow from the definition of time-reversal that we're giving.}
%\jB if there is only a single state for a system then it is impossible to \jR encode any information\e\TODOj{We probably don't want to phrase this in terms of encoding information but I think the basic idea I'm proposing can be expressed in less operational terms.} into that system, and hence, we can interpret the state as of white noise.
\noindent A system with a single state cannot possibly carry any information, so we can naturally interpret this state as \em (white) noise\em. Taking the standard notion of time reversal within quantum and classical theory (as provided by the Hermitian adjoint and transpose respectively \cite{CKBook}), we find that this is explicitly the case. Hence the result of Theorem \ref{thm-time-rev} can be restated as follows: any retro-causal process theory is necessarily in a state of \emph{eternal (white) noise}.

\section{Example: Time-reversing quantum theory}

\noindent To illustrate the extremely general result from the previous section, we will now consider the concrete example of quantum theory. Quantum theory is  captured by a process theory in which systems correspond to finite dimensional Hilbert spaces, and processes to CPTP maps between them. Typically time-reversal is described by the Hermitian Adjoint, which  respects the structure of diagram as required by our assumptions  \cite{CKpaperI, CKBook}.
%satisfies our diagrammatic notion of time-reversal \cite{CKpaperI, CKBook}.

If we consider reversible dynamics for a quantum system, i.e.~a unitary $U$, then the  corresponding  retro-causal process is $\timerev{U}=U^\dagger$.  Thise is clearly a suitable choice of time-reversed process, since it `undoes' the evolution of the quantum system as expected:
\[
\tikzfig{unitary}
\]
Note that any unitary will belong to both the causal and retro-causal theory:
%\TODOb{Unitaries are causal and unital; so why is decomposition needed?}
this follows from the fact that the time-reverse of the discarding effect is
%can be seen
%if we are explicit about defining the time-reverse of the discarding effect:
%\[
%\tikzfig{discardDecomposition}\quad\iff\quad \tikzfig{maximallyMixedState}
%\]
%that is,
proportional to the maximally mixed state, %The fact that unitaries belong to both theories
%is then simply a statement
and it means that unitaries  are \emph{unital} CPTP maps. Because unitaries belong to both, we must go beyond reversible dynamics to observe the distinction between the causal and retro-causal theories. Specifically, we will consider measurements, i.e.~%which we represent as
physical interactions between quantum and classical systems. In order to readily distinguish between classical and quantum systems, we will adopt the graphical convention of \cite{CKBook}, by which quantum systems are denoted by thick wires and classical systems are denoted by thin wires.
%\footnote{It should be noted that classical and quantum systems are both modelled explicitly in our process-theoretic framework, so that information about control and outcomes is carried explicitly by thin, classical wires. This is in contrast with other diagrammatic frameworks, where the classical side of quantum protocols and experiments is treated `in words', externally to the diagrammatic framework itself.}.
For example, a non-demolition measurement associated with a quantum observable---a complete family $(P_x)_{x \in X}$ of orthogonal projectors on a quantum system---is a process with quantum input and quantum-classical output: it takes a quantum state, non-deterministically applies a projector $P_x$ form the observable to it, and returns the projected state in output, together with the classical label $x \in X$ identifying the projector that has been applied. In the diagrammatic style of \cite{CQMII, CKBook}, one such non-demolition measurement takes the following form:
\begin{equation}\label{eq:meas-explicit}
\tikzfig{meas-explicit}
\end{equation}
The corresponding demolition measurement is the process which is obtained by discarding the quantum output:
\begin{equation}\label{eq:dem-meas-explicit}
\tikzfig{dem-meas-explicit}
\end{equation}
The demolition measurement is a quantum-to-classical process. Applied to a density matrix $\rho$ it results in a probability distribution $\mathbb{P}[x \in X]$ on the set $X$ indexing the projectors:
\[
\mathbb{P}[x \in X] = \operatorname{Tr}\left(P_x \rho\right)
\]
The time-reverse of the non-demolition measurement from equation \ref{eq:meas-explicit} is the following process:
\begin{equation}\label{eq:meas-rev-explicit}
\tikzfig{meas-rev-explicit}
\end{equation}
The classical outcome is now classical control, and picks out a choice of projector to be applied to the quantum system. Similarly, the time-reverse of the demolition measurement from equation (\ref{eq:dem-meas-explicit}) is now a classically controlled preparation of quantum states:
\begin{equation}\label{eq:dem-meas-rev-explicit}
\tikzfig{dem-meas-rev-explicit}
\end{equation}
In a retro-causal theory, however, the only classical state available to control these projector-selection and state-preparation processes is the uniform probability distribution over controlling states. As a consequence, all we can achieve by classical control in time-reversed quantum theory is decoherence in an observable and preparation of the totally mixed state:
\begin{equation}\label{eq:meas-rev-explicit-uniform}
\tikzfig{meas-rev-explicit-uniform} \hspace{2cm} \tikzfig{dem-meas-rev-explicit-uniform}
\end{equation}

\noindent This
%also
observation resolves an apparent contradiction between this work and  the  result in \cite{CoeckeLal} stating that the time-reverse of classical probability theory is a signalling theory:
in principle, the retro-causal classical theory \emph{does} admit signalling processes, but there is no information to actually signal, only eternal noise.  One might therefore be tempted to ask: what happens if we de-couple classical agents from the theory that we reverse, and retain their normal abilities?

\section{No-Go: Retaining an agent's control}

\noindent
%We could ask whether a sensible operational formulation of time-reversed quantum theory could be recovered by retaining an agent's full classical control, % on quantum systems,
%\bR i.e. by re-introducing classical states into the retro-causal theory\e. \bR This,
%While de-coupling classical agents from the theory that we reverse is a somewhat odd thing to do
%from a process theoretic perspective, in most other accounts agents are treated externally to the formal theory, which hides this oddness.

\noindent A process theoretic perspective highlights how de-coupling classical agents from the physical theory is a somewhat odd thing to do: if everything else is modelled explicitly, there is no real reason to believe that agents should not. In other approaches this may appear less odd, but it is only because agents held an external position in the first place.
%in many other approaches to physics we do not explicitly model agents as physical systems within the theory, in these cases retaining the agents ability to control processes would seem more natural.\e However,
Thus said, de-coupling agents from the theory does not solve our problems: as it turns out, the agent would still
%However, the problem that now arises is that in the time-reverse of quantum theory
%the case of time-reversal,
%the agent would
% then
be able to superluminally signal. Indeed, all they have to do is apply the time-reverse of a non-demolition Bell measurement to deterministically realize the projector on the Bell state (where we have exploited the Choi isomorphism to represent all Bell-basis projectors in terms of the Bell state):
\[
\tikzfig{measdiag4}\quad=\quad\tikzfig{measdiag7}
\]
This in turn result in the possibility of performing deterministic quantum teleportation \cite{Kindergarten}:
\[
\tikzfig{bell-teleportation}
\]
%which leads to %the possibility of
%violating no-signalling
Because the agent is now in possession of states other than noise, deterministic quantum teleportation immediately leads to a violation of no-signalling. For example, they could use the time-reverse of the demolition Pauli Z measurement to selectively prepare computation basis states, which they would then deterministically teleport:
\[
\tikzfig{comp-basis-prep}
\]
%It is therefore possible to introduce agents who retain their control, however in doing so we are forced to conclude that the agents would be able to superluminally signal.
We conclude that retaining an agent's full classical control %This
does nothing to help restore time-symmetry: the original causal theory is non-signalling, while the retro-causal theory is signalling,  so the symmetry is %irreparably
broken.

\section{Discussion and conclusion}
%
%{\color{green} ***In the last section we saw how introducing time-symmetry lead to superluminal signalling. In the following section we investigate other ways to obtain a time symmetric theory and show they fall into one of two classes, either they allow for superluminal signalling, or they do not allow for correlations to be generated between systems. The first of which strongly conflicts with relativity and the second with our day-to-day experience and the scientific method.***}
%\bB\section{NoGo 2: passing to a process-ontology}\e

%While the conclusion above is mathematical fact, its
\noindent The physical interpretation of our main result depends on a choice of ontological stance on states and processes. In modern physics, the standard assumption is that the state of a physical system describes its actual ``state of affairs'', evolved and studied by solving appropriate sets of equations: for example, the ontic state of a physical system in classical Hamiltonian mechanics is fully described by a considering a point in its phase space, and can be evolved in time by solving Hamilton's Equations. In this sense, a retro-causal process theory is truly eternal noise, and hence utterly trivial.

One could  %also
consider the alternative, long-standing viewpoint that it is processes, rather than states, that fully characterize the ``state of affairs'' of physical systems: it is \emph{becoming} that matters, rather than \emph{being} \cite{Whitehead,verelst1999early}.  In this case, the retro-causal theory is no longer trivial, as it ends up having as many processes as the causal theory originally had.
%In this other sense, \bR
It would therefore be tempting to conclude that a causal process theory and its time-reverse have the same ``informational content'', with the only distinction that inputs and outputs have exchanged roles \cite{d'ariano2014determinism}.
%\TODOb{Better: crisply saying the retro-causal theory is still different from the original one since processes of a given type in general don't match up, just like it was for states.}
However, %one should be careful to observe that this sameness of \bR informational content\e is not the same as saying
these two process theories are not at all equivalent: % identical:
the processes that can be applied to a particular system will be different for the  two theories, and in particular the states never coincide!
%isomorphic: time-reversal is a contravariant notion, while isomorphism is a covariant one. The bijective correspondence of processes is a global phenomenon, while the processes applicable to a given system going forward in time will in general be different in the two theories. \e

%\TODOb{Don't get this.  The point below isn't really about any interpretation, but just a remark that stands on its own.}
%\bR We can take a compromise position between the two views above,
%by thinking of the retro-causal process theory as a \emph{redundant} description of reality: in principle, there may be many different retro-causal processes applicable to a given system, but the existence of a single state makes it so that no difference will be experienced in practice.\e
%%\footnote{There are a number of issues here, most notably that it is too long, that Oreshkov-Cerf is missing, and I don't at all get the last paragraph.}
%%\section{Time-symmetric extensions of theories}

%\TODOb{Shouldn't we be a bit more specific here about particulsr time-reversed theories?}
The results we present here
% \bB also\e 
seem to be in conflict with recent time-symmetric approaches to quantum theory,
%which formulate it in a time-symmetric way
such as those by \cite{aharonov2010time,oreshkov2015operational,oreshkov2016operational}. The resolution  of this apparent conflict lies in the observation that those time-symmetric theories are not terminal in their most general formulation: they are not \textit{a priori} compatible with relativity, at least not in the compositional, process-theoretic sense laid out by \cite{IFF}. Instead, asymmetric boundary conditions must be imposed to ensure compatibility with causal structure (e.g.~to forbid superluminal signalling). It was shown in \cite{oreshkov2015operational} that suitable boundary conditions and dynamics can be chosen so that process terminality is recovered: in that case, the results of the time-symmetric framework coincide with those presented in this work. To us, however, this entire setup seems unnecessarily complicated: after all, relativistic principles do not make any obvious statements about the boundary conditions of the universe. In contrast, the terminality condition that we impose is directly related to the causal structure of relativity, and as such we deem it to be a far more natural constraint.

\section {Acknowledgements}

The result of this paper emerged in a discussion following a debate led by Ognyan Oreshkov on time-symmetry in physics. Ognyan also pointed out some shortcomings in the first publicized version of this paper with regard to the presentation of his results in \cite{oreshkov2015operational}. We also thank Robert Raussendorf and Rui Soares Barbosa for discussions which greatly improved the presentation of our results. This project/publication  was made possible through the support of a grant  from the John Templeton Foundation. The opinions expressed in this publication are those of the author(s) and do not necessarily reflect the views of the John Templeton Foundation. JHS is funded by Perimeter Institute for Theoretical Physics. Research at Perimeter Institute is supported by the Government of Canada through the Department of Innovation, Science and Economic Development Canada and by the Province of Ontario through the Ministry of Research, Innovation and Science.

\bibliographystyle{apsrev}
\bibliography{main}

\end{document}